\documentclass[journal,comsoc]{IEEEtran}
\usepackage[T1]{fontenc}
\usepackage{tabularx}
\usepackage{xcolor}
\usepackage{multirow}
\usepackage{booktabs}
\usepackage{array}
\newcolumntype{L}[1]{>{\raggedright\let\newline\\\arraybackslash\hspace{0pt}}m{#1}}
\newcolumntype{C}[1]{>{\centering\let\newline\\\arraybackslash\hspace{0pt}}m{#1}}
\newcolumntype{R}[1]{>{\raggedleft\let\newline\\\arraybackslash\hspace{0pt}}m{#1}}
\usepackage[english]{babel}
\usepackage{amsthm}
\usepackage{amssymb}
\newtheorem{theorem}{Theorem}
\newtheorem{lemma}{Lemma}
\theoremstyle{plain}
\newtheorem{corollary}{Corollary}
\theoremstyle{plain}

\theoremstyle{remark}

\usepackage{cite}
\ifCLASSINFOpdf
   \usepackage[pdftex]{graphicx}
\else
   \usepackage[dvips]{graphicx}
   \graphicspath{{../eps/}}
   \DeclareGraphicsExtensions{.eps}
\fi
\usepackage[normalem]{ulem}
\usepackage{amsmath}
\usepackage[cmintegrals]{newtxmath}
\usepackage{algorithmic}
\usepackage{multicol}
\usepackage{algorithm}
\usepackage{verbatim}
\usepackage{array}
\usepackage{makecell}
\usepackage[pagewise]{lineno}
\usepackage{upgreek}
\ifCLASSOPTIONcompsoc
  \usepackage[caption=false,font=normalsize,uabelfont=sf,textfont=sf]{subfig}
\else
  \usepackage[caption=false,font=footnotesize]{subfig}
\fi
\begin{document}
\bstctlcite{bstctl:force_etal}

\abovedisplayshortskip=1pt
\belowdisplayshortskip=1pt
\abovedisplayskip=1pt
\belowdisplayskip=1pt
\textfloatsep=1pt
\floatsep=1pt
\intextsep=1pt

\setcounter{figure}{0}
\renewcommand{\figurename}{Fig.}
\renewcommand{\thefigure}{\arabic{figure}}
\title{Constellation-Independent Range Estimation in Payload-Based OFDM-ISAC}
\newcommand{\yang}[1]{\textcolor{blue}{\textbf{[DIY: #1]}}}
\newcommand{\han}[1]{\textcolor{magenta}{\textbf{[KWH: #1]}}}
\newcommand{\revtext}[1]{\textcolor{blue}{#1}}

\author{Dongil~Yang,~\IEEEmembership{Student Member,~IEEE,} Kaitao~Meng,~\IEEEmembership{Member,~IEEE,}    
Christos~Masouros,~\IEEEmembership{Fellow,~IEEE} and
Kawon~Han,~\IEEEmembership{Member,~IEEE}}

\begin{comment}
\thanks{
D. Yang and K. Han are with the Department of Electrical Engineering, Ulsan National Institute of Science and Technology (UNIST), Ulsan, South Korea (emails:\{dongil1223, kawon.han\}@unist.ac.kr).
K. Meng is with the Department of Electrical and Electronic Engineering, University of Manchester, Manchester, UK.
C. Masouros is with the Department of Electronic and Electrical Engineering, University College London, London, UK.}} 
\end{comment}

\maketitle

\begin{abstract}
Orthogonal frequency division multiplexing (OFDM) is a key waveform for integrated sensing and communication (ISAC) due to its spectral efficiency and compatibility with modern wireless standards. In multi-target and clutter-rich environments, however, payload-based OFDM-ISAC can suffer from data-dependent sidelobes induced by non-constant-modulus modulation symbols. To overcome these limitations, this paper proposes a region-of-interest mismatched filter (ROI-MMF) that suppresses sidelobes within a prescribed delay region while preserving the mainlobe response. By leveraging the Woodbury identity, the proposed design admits an efficient closed-form implementation whose complexity scales with the ROI size rather than the number of subcarriers. We theoretically provide the ranging mean-square error (MSE) of the designed ROI-MMF, which shows the superior performance compared to conventional matched filtering (MF) and reciprocal filtering (RF) sensing receivers. Simulations across various constellations show that the proposed sensing receiver achieves a ranging MSE approaching the Cram\'er--Rao bound (CRB), which notably confirms that our design preserves the target ranging performance even under the non-constant-modulus constellation. Finally, the framework is experimentally validated with our over-the-air OFDM-ISAC testbed.
\end{abstract}

\begin{IEEEkeywords}
Cram\'er--Rao bound (CRB), ISAC, mean-square error (MSE), mismatched filtering, OFDM, range estimation.
\end{IEEEkeywords}

\IEEEpeerreviewmaketitle

\vspace{-1\baselineskip}
\section{Introduction}
\IEEEPARstart{I}ntegrated sensing and communication (ISAC) is a key enabling technology for future wireless systems, because communication and sensing can share the same spectrum, hardware, and signaling resources. Among candidate waveforms, orthogonal frequency division multiplexing (OFDM) is particularly attractive since it offers high spectral efficiency, is already adopted in modern communication standards \cite{prasad2004ofdm}, and remains well suited for joint communication and ranging due to its favorable sidelobe behavior under cyclic-prefix (CP) signaling \cite{liu2025cp}. Thus, OFDM-ISAC can add sensing capability to wireless infrastructure without requiring major changes at the physical layer, and is being actively explored across diverse deployment scenarios and system designs \cite{xie2026ris, xie2026lightweight}.

In OFDM-ISAC, one promising direction is to use communication data payloads directly for sensing. Since payload-based sensing exploits the full OFDM frame instead of relying only on dedicated reference signals such as preambles and pilots, it improves resource efficiency, coherent processing gain, range and Doppler ambiguities, resulting in better sensing accuracy \cite{sturm2011waveform}. The ranging performance is affected by the modulation constellation, especially under non-unit-amplitude signaling in multi-target and clutter-rich environments.

For payload-based OFDM-ISAC, sensing receive filters including the matched filter (MF) and the mismatched filter (MMF) have been investigated under the impact of the modulation constellation \cite{keskin2025fundamental, han2025constellation, han2025next}. The MF preserves the original signal structure, but under non-constant-modulus constellations it produces data-dependent sidelobes\cite{liu2025uncovering}. These sidelobes increase inter-target interference and degrade delay estimation accuracy. On the other hand, the reciprocal filter (RF) and the Wiener filter (WF), both of which fall into the family of the MMF receivers, suppress this sidelobe at the cost of the signal-to-noise (SNR) loss due to amplified noise with non-constant-modulus constellations \cite{keskin2025fundamental}. Recent work in \cite{han2025constellation} analyzed these effects in closed form: the MF ranging error is determined by the fourth-order moment of the constellation, while the RF error depends on its inverse second-order moment. In other words, neither receive filter focusing the whole delay domain approaches the Cram\'er--Rao bound (CRB) independently of the constellation once high-order modulation is used. While transmit signaling designs, such as ISAC constellation shaping \cite{han2025next, geiger2026constellation, du2026probabilistic, du2024reshaping} and constellation selection \cite{meng2026constellation} may be adopted to overcome those limitations, they still introduce a performance trade-off between sensing and communications.

\begin{figure}[t!]
    \centering
    \includegraphics[width=0.9\columnwidth]{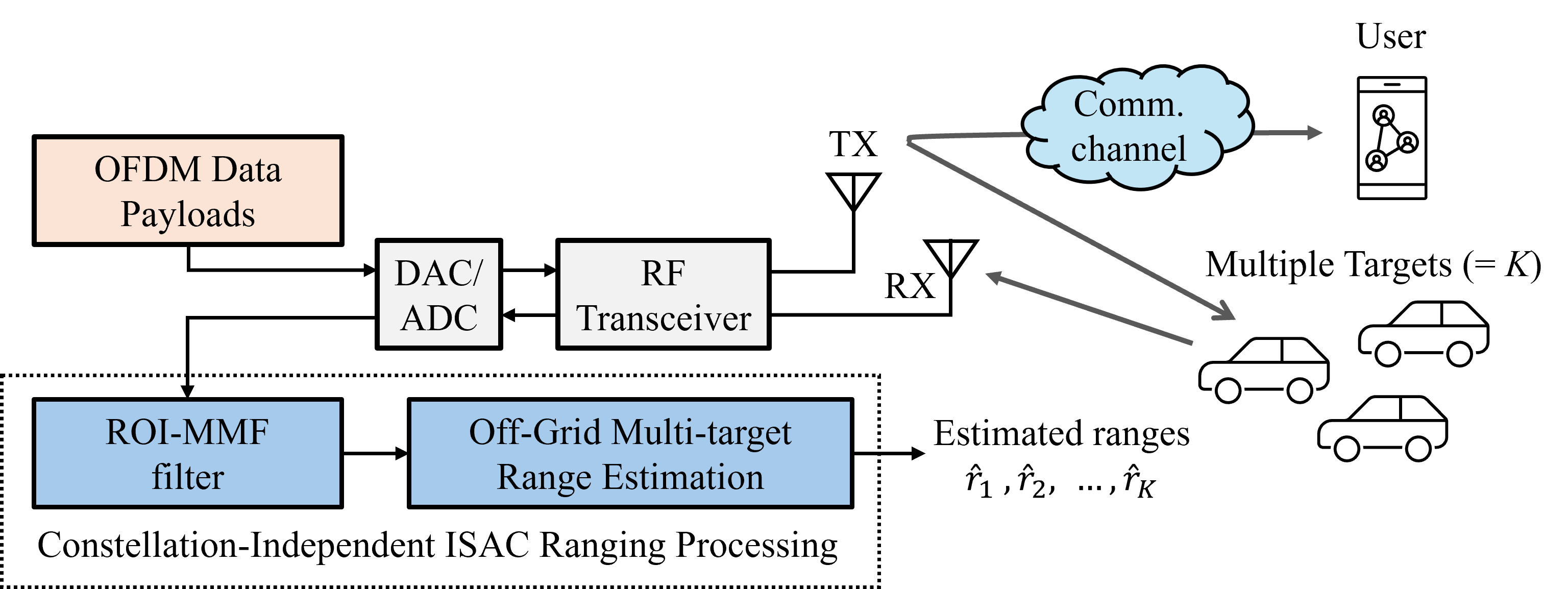}
    \caption{The constellation-independent OFDM-ISAC ranging processing, consisting of the ROI-MMF receiver and off-grid multi-target range estimation.}
    \label{fig:pipeline}
\end{figure}

To address these limitations, we design a region-of-interest mismatched filter (ROI-MMF) for payload-based OFDM-ISAC ranging as shown in Fig.~\ref{fig:pipeline}. Unlike the conventional MMF, such as RF and WF, suppressing unwanted responses across the whole delay domain, ROI-MMF suppresses sidelobes only within a prescribed ROI where target detection and estimation matter most, while preserving the mainlobe response so that the range estimation performance is maintained. This viewpoint is natural in practical sensing scenarios, since the target delay range is either known a priori or bounded by geometry in many deployments. By confining the suppression to the ROI, the proposed receiver mitigates the data-dependent sidelobe of MF without incurring the severe inverse-symbol-power noise enhancement of RF, thereby enabling constellation-independent ranging behavior. 

The main contributions of this paper are summarized as follows. First, we propose an ROI-based mismatched filtering framework for payload-based OFDM-ISAC that suppresses sidelobes over a prescribed delay region while preserving the mainlobe response. Second, we derive a closed-form solution via the Woodbury identity, whose complexity scales with the ROI size rather than the total number of subcarriers. Third, we develop a theoretical analysis under a multi-target environment and show that the proposed receiver approaches the CRB independently of the constellation, unlike MF and RF which suffer constellation-dependent penalties for target range estimation. Finally, we validate the proposed framework using numerical simulations across multiple constellations and measured OFDM-ISAC data.

\vspace{-1\baselineskip}
\section{System Model}\label{Sec::2}

\subsection{OFDM-ISAC Transmit Signal}

We consider a monostatic OFDM-ISAC transceiver employing an
OFDM waveform with $N$ subcarriers. Since the sensing receiver is co-located with the transmitter and processes the echoes of its own transmitted signal, the transmit symbols $\mathbf{x}$ are intrinsically known at the sensing receiver. The subcarrier spacing is
$\Delta f$, the total bandwidth is $B = N\Delta f$, and a cyclic
prefix (CP) of duration $T_\mathrm{cp}$ is prepended to each
OFDM symbol. The transmit vector is denoted as $\mathbf{x} =
[x_0, x_1, \ldots, x_{N-1}]^T \in \mathbb{C}^{N}$, whose entries
are i.i.d.\ communication symbols drawn from a Q-ary constellation
$\mathcal{C} = \{s_1, s_2, \ldots, s_Q\}$. Without loss of
generality, we assume the constellation is normalized to zero
mean and unit average power, i.e., $\mathbb{E}[x_n] = 0$ and
$\mathbb{E}[|x_n|^{2}] = 1$. In addition, each OFDM block is normalized such that $\|\mathbf{x}\|^2=N$.

For the sensing performance analysis in Section~\ref{Sec::3},
we define two statistical moments of the constellation as
$\mu_{4} = \mathbb{E}\!\left[|x_n|^{4}\right]$ and $\nu_{-2} = \mathbb{E}\!\left[|x_n|^{-2}\right]$, which represent the fourth-order moment and the inverse second-order
moment of the constellation, respectively \cite{han2025constellation}. For unit-modulus constellations
such as PSK, both moments reduce to $\mu_{4} = \nu_{-2} = 1$. 

\vspace{-1\baselineskip}
\subsection{Sensing Signal Model}
We consider $K$ scatterers within the CP-induced unambiguous
delay window. The scatterers are assumed to be sufficiently separated in the delay domain to be resolvable. The $k$-th scatterer is characterized by a complex
amplitude $\alpha_k \in \mathbb{C}$ and a round-trip delay
$\tau_k$, where $\alpha_k$ absorbs the path loss
and the radar cross-section of the scatterer.
The Doppler shift within one coherent processing interval (CPI)
is assumed to be negligible relative to the subcarrier spacing,
so that inter-carrier interference is avoided.

After CP removal and $N$-point FFT at the sensing receiver, the
frequency-domain receive vector is modeled as
\begin{equation}
    \mathbf{y} \;=\; \mathbf{a}^{T}\mathbf{H}\mathbf{X} + \mathbf{n},
    \label{eq:rx-matrix}
\end{equation}
where $\mathbf{a} = [\alpha_1, \ldots, \alpha_K]^{T} \in \mathbb{C}^{K}$
is the complex amplitude vector, $\mathbf{X} = \mathrm{diag}(\mathbf{x})
\in \mathbb{C}^{N \times N}$ is the diagonal transmit symbol matrix,
and $\mathbf{n} \sim \mathcal{CN}(\mathbf{0}, \sigma^{2}\mathbf{I}_{N})$
is additive white Gaussian noise. The delay-channel matrix
$\mathbf{H} = [\mathbf{h}(\tau_{1}), \ldots, \mathbf{h}(\tau_{K})]^{T}
\in \mathbb{C}^{K \times N}$ is composed of the delay steering vectors
\begin{equation}
    \mathbf{h}(\tau) \;=\; \bigl[1,\;
        e^{-j 2\pi \Delta f\, \tau},\;\ldots,\;
        e^{-j 2\pi (N-1)\Delta f\, \tau}\bigr]^{T}.
\end{equation}
The sensing SNR of the $k$-th scatterer is defined as
$\mathrm{SNR}_k = |\alpha_k|^{2}/\sigma^{2}$. To process the sensing received signal, we apply a receive filter $\mathbf{w} \in \mathbb{C}^{N}$ to obtain $\mathbf{y}_{\mathrm{out}} = \mathbf{w}^{*} \odot \mathbf{y}$.

\section{Proposed ROI-MMF Design for OFDM-ISAC}\label{Sec::3}

\subsection{ROI-MMF Design}
In many practical sensing scenarios, the target delays of interest are confined within a dedicated range, of which a bounded region can be determined by the deployment geometry. We therefore assume that all scatterers of
interest are confined to a prescribed region of interest (ROI) in
the delay domain,
\begin{equation}
    \tau_{k} \in \mathcal{T} =  [\tau_{\min},\, \tau_{\max}],
    \qquad k = 1, \ldots, K,
    \label{eq:roi-def}
\end{equation}
with known boundaries $\tau_{\min}$ and $\tau_{\max}$. In practical deployments, the ROI $\mathcal{T}$ can be determined by the link budget and deployment geometry, which bound the effective sensing range, and is thus a physically meaningful prior. When the target region cannot be determined a priori, it can be identified by an initial full-domain search and then refined by the ROI-MMF, following the conventional radar search-and-track procedure. Given the
transmit block $\mathbf{X}$ and the receive block $\mathbf{Y}$,
our objective is to jointly estimate the number of scatterers $K$
and the off-grid delay set $\{\tau_{k}\}_{k=1}^{K}$, while treating
the complex amplitudes $\{\alpha_{k}\}$ as nuisance parameters.

To suppress data-dependent sidelobes within the ROI while preserving
the mainlobe response, we design a mismatched filter $\mathbf{w} \in
\mathbb{C}^{N}$ that minimizes the sidelobe energy across a finite
set of delay shifts covering $\mathcal{T}$, subject to a unit-gain
constraint on the transmit vector $\mathbf{x}$.

Let $M_{\max}$ denote the largest delay shift (in bin units)
corresponding to $\mathcal{T}$. We denote the bilateral shift set as
\begin{equation}
    \mathcal{S} = \{-M_{\max}, \ldots, -1\} \cup \{1, \ldots, M_{\max}\},
    \label{eq:shift-set}
\end{equation}
with cardinality $|\mathcal{S}| = 2M_{\max}$. For each $m \in
\mathcal{S}$, we define the shifted steering vector
$[\mathbf{e}_{m}]_{n} = e^{j 2\pi m n/N}$ and the shifted transmit
replica $\mathbf{z}_{m} = \mathbf{x} \odot \mathbf{e}_{m} \in
\mathbb{C}^{N}$.
Multiplication by $\mathbf{e}_{m}$ in the frequency domain corresponds to a circular delay shift by $m$ bins in the time domain, so $\mathbf{z}_{m}$ represents the frequency-domain response that the receiver would observe from a target at delay 
shift $m$ relative to the mainlobe. 
Stacking the replicas yields the off-target steering matrix $\mathbf{Z} = [\,\mathbf{z}_{1}\,, \mathbf{z}_{2}\,, \;\ldots, \mathbf{z}_{2M_{\max}\,}]
\in \mathbb{C}^{N \times |\mathcal{S}|}$.
Consequently, $\|\mathbf{Z}^{H}\mathbf{w}\|_{2}^{2} = \sum_{m \in \mathcal{S}}|\mathbf{w}^{H}\mathbf{z}_{m}|^{2}$ measures the total sidelobe energy across the prescribed off-target shift set.

With this ROI in hand, the ROI-MMF design problem is formulated as
\begin{equation}
    \min_{\mathbf{w}}\;
    \|\mathbf{Z}^{H}\mathbf{w}\|_{2}^{2}
    + \lambda \|\mathbf{w}\|_{2}^{2}
    \quad \mathrm{s.t.} \quad
    \mathbf{w}^{H}\mathbf{x} = N,
    \label{eq:roi-mmf-opt}
\end{equation}
where the first term penalizes the sidelobe energy within $\mathcal{S}$,
the second term is a ridge penalty that controls noise enhancement,
and the constraint fixes the mainlobe gain. The regularization
parameter $\lambda > 0$ governs the trade-off between sidelobe
suppression and noise enhancement. Notably, the formulated problem
in \eqref{eq:roi-mmf-opt} can have a closed-form solution, which
is efficiently obtained by utilizing the Woodbury identity as
follows.

\begin{theorem}
\label{thm:woodbury}
The unique solution to \eqref{eq:roi-mmf-opt} is given by
\begin{equation}
    \mathbf{w}_{\mathrm{ROI}} =
    \frac{N}{\tilde{\mathbf{w}}^{H}\mathbf{x}}\,\tilde{\mathbf{w}},
    \quad
    \tilde{\mathbf{w}} = \frac{1}{\lambda}\!\left(
        \mathbf{x} - \mathbf{Z}
        (\lambda \mathbf{I}_{|\mathcal{S}|} + \mathbf{Z}^{H}\mathbf{Z})^{-1}
        \mathbf{Z}^{H}\mathbf{x}
    \right),
    \label{eq:w-roi}
\end{equation}
which requires only the inversion of an $|\mathcal{S}| \times
|\mathcal{S}|$ matrix.
\end{theorem}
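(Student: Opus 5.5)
The problem in \eqref{eq:roi-mmf-opt} is a strictly convex quadratic program with a single linear equality constraint, so I will solve it with a Lagrange multiplier and then simplify the resulting matrix inverse using the Woodbury identity. First I rewrite the objective as $\mathbf{w}^{H}\mathbf{R}\mathbf{w}$ with $\mathbf{R} = \mathbf{Z}\mathbf{Z}^{H} + \lambda\mathbf{I}_{N}$. Since $\mathbf{Z}\mathbf{Z}^{H}\succeq \mathbf{0}$ and $\lambda>0$, $\mathbf{R}$ is Hermitian positive definite. The feasible set $\{\mathbf{w}:\mathbf{w}^{H}\mathbf{x}=N\}$ is a nonempty affine subspace because $\mathbf{x}\neq\mathbf{0}$ (recall $\|\mathbf{x}\|^{2}=N$). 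A strictly convex coercive function on a nonempty affine set has exactly one minimizer, which gives uniqueness.

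Next I form the Lagrangian $\mathbf{w}^{H}\mathbf{R}\mathbf{w} - \mu^{*}(\mathbf{w}^{H}\mathbf{x}-N) - \mu(\mathbf{x}^{H}\mathbf{w}-N)$. Setting the Wirtinger gradient with respect to $\mathbf{w}^{*}$ to zero gives $\mathbf{R}\mathbf{w} = \mu^{*}\mathbf{x}$, so $\mathbf{w} = \mu^{*}\mathbf{R}^{-1}\mathbf{x}$. Imposing the constraint gives $\mu^{*} = N/(\mathbf{x}^{H}\mathbf{R}^{-1}\mathbf{x})$, which is real and positive. Hence
\[
\mathbf{w}_{\mathrm{ROI}} = \frac{N}{\mathbf{x}^{H}\mathbf{R}^{-1}\mathbf{x}}\,\mathbf{R}^{-1}\mathbf{x}.
\]
This is the classical MVDR/Capon form.

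Finally I apply the Woodbury identity:
\[
(\lambda\mathbf{I}_{N}+\mathbf{Z}\mathbf{Z}^{H})^{-1} = \frac{1}{\lambda}\left(\mathbf{I}_{N} - \mathbf{Z}(\lambda\mathbf{I}_{|\mathcal{S}|}+\mathbf{Z}^{H}\mathbf{Z})^{-1}\mathbf{Z}^{H}\right).
\]
The inner inverse exists because $\lambda\mathbf{I}+\mathbf{Z}^{H}\mathbf{Z}\succ\mathbf{0}$. It follows that $\mathbf{R}^{-1}\mathbf{x}=\tilde{\mathbf{w}}$ exactly as defined in \eqref{eq:w-roi}.

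For the normalization I check that $\mathbf{x}^{H}\mathbf{R}^{-1}\mathbf{x} = \mathbf{x}^{H}\tilde{\mathbf{w}} = \tilde{\mathbf{w}}^{H}\mathbf{x}$. This holds because the quantity is real, being a quadratic form in the positive definite $\mathbf{R}^{-1}$, and it is strictly positive, so the prefactor $N/(\tilde{\mathbf{w}}^{H}\mathbf{x})$ is well defined. The only inverse now required is of the $|\mathcal{S}|\times|\mathcal{S}|$ matrix, which gives the complexity claim.

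The main obstacle is bookkeeping rather than mathematics. I need the complex derivative convention to be consistent so that the multiplier enters as $\mu^{*}$ and cancels cleanly. I also need to confirm that $\tilde{\mathbf{w}}^{H}\mathbf{x}$ equals the real quadratic form, so that the scaling exactly enforces $\mathbf{w}^{H}\mathbf{x}=N$ rather than its conjugate.
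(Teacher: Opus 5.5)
Your proof is correct and follows the same route as the paper. Stationarity gives $(\mathbf{Z}\mathbf{Z}^{H}+\lambda\mathbf{I}_{N})\mathbf{w}=\beta\mathbf{x}$, the Woodbury identity reduces the inverse to an $|\mathcal{S}|\times|\mathcal{S}|$ system, and the gain constraint fixes the scale; you additionally justify uniqueness via strict convexity and verify that $\tilde{\mathbf{w}}^{H}\mathbf{x}$ is real and positive, both of which the paper leaves implicit.
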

\renewcommand\qedsymbol{$\blacksquare$}
\begin{proof}
The KKT stationarity condition of \eqref{eq:roi-mmf-opt} yields
$(\mathbf{Z}\mathbf{Z}^{H} + \lambda \mathbf{I}_{N})\mathbf{w}
= \beta \mathbf{x}$, which requires inverting an $N \times N$
matrix. Applying the Woodbury identity
\begin{equation}
    (\lambda \mathbf{I}_{N} + \mathbf{Z}\mathbf{Z}^{H})^{-1}
    = \tfrac{1}{\lambda}\mathbf{I}_{N}
    - \tfrac{1}{\lambda}\mathbf{Z}(\lambda \mathbf{I}_{|\mathcal{S}|}
    + \mathbf{Z}^{H}\mathbf{Z})^{-1}\mathbf{Z}^{H}
\end{equation}
reduces the inversion to an $|\mathcal{S}| \times |\mathcal{S}|$
system. Enforcing the gain constraint $\mathbf{w}^{H}\mathbf{x}
= N$ yields \eqref{eq:w-roi}.
\end{proof}
\vspace{-0.5\baselineskip}
Theorem~\ref{thm:woodbury} indicates that the computational complexity of the ROI-MMF design scales with the ROI size $|\mathcal{S}|$ rather than the total number of subcarriers $N$. For ROIs of practical interest with $|\mathcal{S}| < N$, this provides a substantial reduction compared to a direct inversion of the full $N \times N$ matrix. The conventional MF and RF filters are obtained directly as $\mathbf{w}_{\mathrm{MF}}=\mathbf{x}$ and $\mathbf{w}_{\mathrm{RF}}=1/\mathbf{x}^{*}$ and applied by element-wise operations at $O(N)$ cost. The proposed ROI-MMF instead attains near-CRB accuracy at the modest additional cost of an $O(N|\mathcal{S}|^{2}+|\mathcal{S}|^{3})$ filter computation, computed only once per known payload block, which is efficiently handled by the Woodbury identity otherwise a direct inversion of~\eqref{eq:roi-mmf-opt} would cost $O(N^{3})$.

\vspace{-1\baselineskip}
\subsection{Theoretical Performance Analysis of ROI-MMF}
\label{sec:perf-analysis}
We now analyze the delay estimation MSE of the ROI-MMF receiver and compare it against the MF and RF baselines. For a general filter $\mathbf{w}$, let $g_{n} = w_{n}^{*}x_{n}$ denote the effective product, and define the sidelobe as
\begin{equation}
    c_{m} = \frac{1}{N}\sum_{n=0}^{N-1} g_{n} e^{j 2\pi m n/N} = \frac{1}{N}\mathbf{w}^{H}\mathbf{z}_{m},
    \label{eq:cm}
\end{equation}
where the normalized filter gives $c_{0} = 1$. For $K$ targets with delays $\{\tau_{k}\}_{k=1}^{K}$, following the asymptotic estimator framework of \cite{han2025constellation}, the per-target delay estimation MSE under filter $\mathbf{w}$ admits the unified form
\begin{equation}
    \mathrm{MSE}_{k} =
    \frac{I_{k}(\mathbf{w}) + N_{k}(\mathbf{w})}
         {2(2\pi\Delta f)^{2}|\alpha_{k}|^{2}
          \bigl(\sum_{n} n^{2}\mathbb{E}[g_{n}]\bigr)^{2}},
    \label{eq:mse-general}
\end{equation}
where $I_{k}(\mathbf{w}) = \sum_{j \neq k} |\alpha_{j}|^{2}\, \mathrm{Var}(B_{k,j})$ with $B_{k,j} = \sum_{n} n g_{n}e^{j 2\pi n \xi_{k,j}}$ and $\xi_{k,j} = \Delta f (\tau_{k} - \tau_{j})$ captures the aggregate interference variance from the other $K-1$ targets, and $N_{k}(\mathbf{w}) = \sigma^{2}\mathbb{E}\left[\sum_{n} n^{2}|w_{n}|^{2}\right]$ captures the noise enhancement. The gain constraint $\mathbf{w}^{H}\mathbf{x} = N$ holds deterministically by construction and, together with the i.i.d.\ symmetry of $\{x_{n}\}$ across subcarriers, yields $\mathbb{E}[g_{n}] = 1$ for all $n$, so the denominator of \eqref{eq:mse-general} reduces to $(\sum_{n} n^{2})^{2}$.

\subsubsection{MF and RF baselines} For the matched filter $\mathbf{w}_{\mathrm{MF}} = \mathbf{x}$ and the reciprocal filter $\mathbf{w}_{\mathrm{RF}} = 1/\mathbf{x}^{*}$, specializing \eqref{eq:mse-general} recovers the closed-form expressions \cite{han2025constellation}
\begin{align}
    \mathrm{MSE}_{\mathrm{MF},k}
    &= \frac{3\bigl[(\mu_{4} - 1)\sum_{j\neq k}|\alpha_{j}|^{2}
                    + \sigma^{2}\bigr]}
            {8\pi^{2}\Delta f^{2}|\alpha_{k}|^{2}N^{3}},
    \label{eq:mse-mf}\\
    \mathrm{MSE}_{\mathrm{RF},k}
    &= \frac{3\sigma^{2}\nu_{-2}}
            {8\pi^{2}\Delta f^{2}|\alpha_{k}|^{2}N^{3}}.
    \label{eq:mse-rf}
\end{align}
The MF MSE is $\mu_{4}$-limited through data-dependent sidelobes, while the RF MSE is $\nu_{-2}$-limited through noise enhancement. Neither achieves the CRB under non-constant-modulus constellations.

\subsubsection{ROI-MMF} To analyze the ranging MSE of the proposed ROI-MMF under random communication signals, we first provide an energy identity for the ROI-MMF that bounds both its residual sidelobe energy and filter norm. These bounds will serve as the analytical anchors for the interference and noise-enhancement analysis developed throughout this subsection.

\begin{lemma}
\label{lem:energy}
The ROI-MMF solution \eqref{eq:w-roi} satisfies
\begin{equation}
    \|\mathbf{Z}^{H}\mathbf{w}_{\mathrm{ROI}}\|_{2}^{2}
    \leq \frac{\lambda N}{1 - \rho_{\mathcal{S}}},
    \quad
    \|\mathbf{w}_{\mathrm{ROI}}\|_{2}^{2}
    \leq \frac{N}{1 - \rho_{\mathcal{S}}},
    \label{eq:energy-identity}
\end{equation}
where $\rho_{\mathcal{S}} = \tfrac{1}{N}\mathbf{x}^{H}\mathbf{Z}
(\lambda \mathbf{I}_{|\mathcal{S}|} + \mathbf{Z}^{H}\mathbf{Z})^{-1}
\mathbf{Z}^{H}\mathbf{x} \in [0, 1)$.
\end{lemma}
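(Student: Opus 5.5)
The plan is to evaluate the optimal objective value of \eqref{eq:roi-mmf-opt} in closed form. Both bounds in \eqref{eq:energy-identity} then follow by dropping one of the two nonnegative terms.

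Let $\mathbf{A} = \lambda\mathbf{I}_{N} + \mathbf{Z}\mathbf{Z}^{H}$, which is Hermitian positive definite since $\lambda>0$. By the Woodbury identity used in the proof of Theorem~\ref{thm:woodbury}, $\tilde{\mathbf{w}} = \mathbf{A}^{-1}\mathbf{x}$.

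\textbf{Step 1: the normalization constant.} Using $\|\mathbf{x}\|^2 = N$ and the definition of $\rho_{\mathcal{S}}$,
\begin{equation*}
\tilde{\mathbf{w}}^{H}\mathbf{x} = \mathbf{x}^{H}\mathbf{A}^{-1}\mathbf{x} = \tfrac{1}{\lambda}\bigl(N - N\rho_{\mathcal{S}}\bigr) = \tfrac{N(1-\rho_{\mathcal{S}})}{\lambda}.
\end{equation*}
Hence $\mathbf{w}_{\mathrm{ROI}} = \tfrac{\lambda}{1-\rho_{\mathcal{S}}}\mathbf{A}^{-1}\mathbf{x}$.

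\textbf{Step 2: the optimal objective value.} The objective equals $\mathbf{w}^{H}\mathbf{A}\mathbf{w}$. Substituting the expression from Step 1 gives
\begin{equation*}
\|\mathbf{Z}^{H}\mathbf{w}_{\mathrm{ROI}}\|_2^2 + \lambda\|\mathbf{w}_{\mathrm{ROI}}\|_2^2 = \tfrac{\lambda^2}{(1-\rho_{\mathcal{S}})^2}\,\mathbf{x}^{H}\mathbf{A}^{-1}\mathbf{x} = \tfrac{\lambda N}{1-\rho_{\mathcal{S}}}.
\end{equation*}
This equality is the energy identity.

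\textbf{Step 3: the two bounds.} Both terms on the left are nonnegative. Dropping $\lambda\|\mathbf{w}_{\mathrm{ROI}}\|_2^2$ gives the first bound. Dropping $\|\mathbf{Z}^{H}\mathbf{w}_{\mathrm{ROI}}\|_2^2$ and dividing by $\lambda>0$ gives the second bound.

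\textbf{Step 4: the range $\rho_{\mathcal{S}}\in[0,1)$.} This is the only step needing care, because it guarantees the denominators $1-\rho_{\mathcal{S}}$ are positive. Write
\begin{equation*}
\mathbf{P} = \mathbf{Z}(\lambda\mathbf{I}_{|\mathcal{S}|}+\mathbf{Z}^{H}\mathbf{Z})^{-1}\mathbf{Z}^{H}.
\end{equation*}
Take the SVD $\mathbf{Z} = \mathbf{U}\boldsymbol{\Sigma}\mathbf{V}^{H}$ with singular values $s_i$. Then $\mathbf{P}$ has eigenvalues $s_i^2/(\lambda+s_i^2)$, which lie in $[0,1)$ because $\lambda>0$. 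Consequently $0 \le \mathbf{x}^{H}\mathbf{P}\mathbf{x} \le \max_i \tfrac{s_i^2}{\lambda+s_i^2}\,\|\mathbf{x}\|^2 < N$. Dividing by $N$ gives $\rho_{\mathcal{S}}\in[0,1)$.

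Note that Step 1 implicitly requires $\tilde{\mathbf{w}}^{H}\mathbf{x}\neq 0$. Step 4 confirms this, since $\tilde{\mathbf{w}}^{H}\mathbf{x} = N(1-\rho_{\mathcal{S}})/\lambda > 0$. So $\mathbf{w}_{\mathrm{ROI}}$ is well defined and the lemma follows.
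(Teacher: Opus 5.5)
Your proof is correct and follows essentially the same route as the paper. You obtain $\beta=\lambda/(1-\rho_{\mathcal{S}})$ from the gain constraint via Woodbury, derive the energy identity $\|\mathbf{Z}^{H}\mathbf{w}_{\mathrm{ROI}}\|_{2}^{2}+\lambda\|\mathbf{w}_{\mathrm{ROI}}\|_{2}^{2}=\lambda N/(1-\rho_{\mathcal{S}})$ by evaluating $\mathbf{w}^{H}\mathbf{A}\mathbf{w}$ (equivalent to the paper's left-multiplication of the KKT equation by $\mathbf{w}_{\mathrm{ROI}}^{H}$), and drop a nonnegative term; your Step 4, which shows $\rho_{\mathcal{S}}\in[0,1)$ via the eigenvalues $s_i^2/(\lambda+s_i^2)$, proves a fact the paper only asserts and confirms that the normalization in \eqref{eq:w-roi} is well defined.
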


\begin{proof}
Substituting $\mathbf{w}_{\mathrm{ROI}} = \beta(\mathbf{Z}\mathbf{Z}^{H}
+ \lambda \mathbf{I}_{N})^{-1}\mathbf{x}$ into the gain constraint
and applying the Woodbury identity yields $\beta = \lambda/(1 -
\rho_{\mathcal{S}})$. Left-multiplying the KKT equation by
$\mathbf{w}_{\mathrm{ROI}}^{H}$ and using $\mathbf{w}_{\mathrm{ROI}}^{H}
\mathbf{x} = N$ gives $\|\mathbf{Z}^{H}\mathbf{w}_{\mathrm{ROI}}\|_{2}^{2}
+ \lambda\|\mathbf{w}_{\mathrm{ROI}}\|_{2}^{2} = \beta N$, from which
\eqref{eq:energy-identity} follows.
\end{proof}
\vspace{-0.5\baselineskip}
Lemma~\ref{lem:energy} indicates that the residual sidelobe energy
and the filter norm of the ROI-MMF are jointly controlled by the
single quantity $\rho_{\mathcal{S}}$, with the regularization
parameter $\lambda$ governing the trade-off between the two.

Moreover, the following lemma characterizes the ROI-MMF noise enhancement in closed form.
\begin{lemma}
\label{lem:noise-enhancement}
Define $\delta^{(N)}_{\mathrm{ROI}} = \|\mathbf{w}_{\mathrm{ROI}}\|_{2}^{2}/N$ so that $N_k(\mathbf{w}_{\mathrm{ROI}})=\sigma^2\delta^{(N)}_{\mathrm{ROI}}\sum_{n} n^{2}$. Assuming that $\lambda\ll N$ and that the density $|\mathcal S|/N$ is small enough so that the leading term of the Neumann expansion dominates,
\begin{equation}
    \delta^{(N)}_{\mathrm{ROI}}
    \;=\; \frac{N}{N - \kappa_{2}\,|\mathcal{S}|},
    \label{eq:delta-N-LO}
\end{equation}
where $\kappa_{2} = \mathbb{E}\bigl[(|x_{n}|^{2}-1)^{2}\bigr]$ is the centered second moment of the constellation. 
\end{lemma}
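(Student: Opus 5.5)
The starting point is the energy identity from the proof of Lemma~\ref{lem:energy}. Take $\mathbf{w}_{\mathrm{ROI}}=\beta(\mathbf{Z}\mathbf{Z}^{H}+\lambda\mathbf{I}_N)^{-1}\mathbf{x}$ with $\beta=\lambda/(1-\rho_{\mathcal S})$. Squaring gives
\begin{equation*}
\|\mathbf{w}_{\mathrm{ROI}}\|_2^2=\beta^2\,\mathbf{x}^H(\mathbf{Z}\mathbf{Z}^H+\lambda\mathbf{I}_N)^{-2}\mathbf{x}.
\end{equation*}
Rather than handle the squared inverse directly, I will write $\mathbf{w}_{\mathrm{ROI}}=\frac{N}{\tilde{\mathbf{w}}^H\mathbf{x}}\tilde{\mathbf{w}}$ from Theorem~\ref{thm:woodbury}. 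Here $\lambda\tilde{\mathbf{w}}=\mathbf{x}-\mathbf{P}_\lambda\mathbf{x}$ with $\mathbf{P}_\lambda=\mathbf{Z}(\lambda\mathbf{I}+\mathbf{Z}^H\mathbf{Z})^{-1}\mathbf{Z}^H$. In the regime $\lambda\ll N$, $\mathbf{P}_\lambda$ is essentially the orthogonal projector $\mathbf{P}$ onto $\mathrm{span}(\mathbf{Z})$. Then $\lambda\tilde{\mathbf{w}}\approx\mathbf{P}^\perp\mathbf{x}$ and $\lambda\tilde{\mathbf{w}}^H\mathbf{x}\approx\|\mathbf{P}^\perp\mathbf{x}\|^2$, so
\begin{equation*}
\delta^{(N)}_{\mathrm{ROI}}=\frac{N\,\|\mathbf{P}^\perp\mathbf{x}\|^2}{\|\mathbf{P}^\perp\mathbf{x}\|^4}=\frac{N}{N-\mathbf{x}^H\mathbf{P}\mathbf{x}}=\frac{1}{1-\rho_{\mathcal S}},
\end{equation*}
consistent with Lemma~\ref{lem:energy} at $\lambda\to0$. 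The problem therefore reduces to showing $\mathbf{x}^H\mathbf{P}\mathbf{x}\approx\kappa_2|\mathcal S|$.

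Next I compute the Gram quantities. I have $\mathbf{Z}^H\mathbf{x}$ with entries $\mathbf{z}_m^H\mathbf{x}=\sum_n|x_n|^2e^{-j2\pi mn/N}$. Since $\|\mathbf{x}\|^2=N$ forces $\sum_n(|x_n|^2-1)=0$, and $m\neq0$, this equals $\sum_n(|x_n|^2-1)e^{-j2\pi mn/N}=N\,\hat d_m$. Here $d_n=|x_n|^2-1$ is the centered power sequence, with i.i.d.-like entries of variance $\kappa_2$. Likewise $[\mathbf{Z}^H\mathbf{Z}]_{m,m'}=\sum_n|x_n|^2e^{j2\pi(m'-m)n/N}$. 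This equals $N\mathbf{I}+\mathbf{E}$, where $\mathbf{E}$ has zero diagonal and off-diagonal entries $N\hat d_{m-m'}$ of order $\sqrt{N\kappa_2}$. I will use the Neumann expansion
\begin{equation*}
(\mathbf{Z}^H\mathbf{Z})^{-1}=\tfrac1N\mathbf{I}-\tfrac1{N^2}\mathbf{E}+\cdots,
\end{equation*}
which is valid when $\|\mathbf{E}\|/N$ is small, roughly when $|\mathcal S|\kappa_2/N$ is small. This is exactly the stated density assumption. At leading order,
\begin{equation*}
\mathbf{x}^H\mathbf{P}\mathbf{x}\approx\tfrac1N\|\mathbf{Z}^H\mathbf{x}\|^2=\tfrac1N\sum_{m\in\mathcal S}\Bigl|\sum_n d_ne^{-j2\pi mn/N}\Bigr|^2.
\end{equation*}
Each term has expectation $N\kappa_2$, up to the negligible correlation induced by the constraint $\sum_n d_n=0$. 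So $\mathbf{x}^H\mathbf{P}\mathbf{x}\approx\kappa_2|\mathcal S|$, giving $\delta^{(N)}_{\mathrm{ROI}}=N/(N-\kappa_2|\mathcal S|)$. As a sanity check, PSK has $\kappa_2=0$, which gives $\delta=1$ as it should, since $\mathbf{Z}^H\mathbf{x}=0$ and the ROI-MMF coincides with the MF.

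The main obstacle is justifying that the leading Neumann term dominates, i.e.\ that the correction $-\frac{1}{N^2}(\mathbf{Z}^H\mathbf{x})^H\mathbf{E}(\mathbf{Z}^H\mathbf{x})$ and the $\lambda$-corrections are lower order. For the $\lambda$ term, expanding $(\lambda\mathbf{I}+N\mathbf{I}+\mathbf{E})^{-1}$ gives relative corrections of order $\lambda/N$, which vanish under $\lambda\ll N$. For the $\mathbf{E}$ term, I would take expectations using the i.i.d.\ structure of $d_n$. It is a cubic form in the Fourier coefficients $\hat d$, whose mean involves third-order cumulants of $|x_n|^2$ and scales like $|\mathcal S|^2/N$ relative to the leading term's $|\mathcal S|$. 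It is therefore negligible when $|\mathcal S|/N$ is small. I would present this as an expectation, or concentration, argument: replace $\mathbf{x}^H\mathbf{P}\mathbf{x}$ by its mean and keep only the $O(|\mathcal S|)$ contribution, which matches the heuristic ``leading term'' phrasing of the lemma.
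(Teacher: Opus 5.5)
Your proposal is correct and follows essentially the same route as the paper. Both arguments reduce $\delta^{(N)}_{\mathrm{ROI}}$ to $1/(1-\rho_{\mathcal S})$ for $\lambda\ll N$, write $\mathbf{Z}^H\mathbf{Z}=N\mathbf{I}+\mathbf{E}$ with a zero-mean, zero-diagonal perturbation, keep the leading Neumann term, and use $\mathbb{E}\|\mathbf{Z}^H\mathbf{x}\|_2^2=|\mathcal S|N\kappa_2$ together with concentration. The only minor difference is the first reduction: you get $\delta^{(N)}_{\mathrm{ROI}}\approx N/\|\mathbf{P}^\perp\mathbf{x}\|_2^2$ from the projector limit of the Woodbury form in Theorem~\ref{thm:woodbury}, whereas the paper drops the $O(\lambda)$ term $\|\mathbf{Z}^H\mathbf{w}_{\mathrm{ROI}}\|_2^2/\lambda$ in the energy identity of Lemma~\ref{lem:energy}; your order-of-magnitude check on the first Neumann correction is a useful addition rather than a change of method.
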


\begin{proof}
With $\mathbf{X} = \mathrm{diag}(\mathbf{x})$ and $\mathbf{V} = [\mathbf{e}_{m_{1}}, \ldots, \mathbf{e}_{m_{|\mathcal{S}|}}]$, we have $\mathbf{Z} = \mathbf{X}\mathbf{V}$. With $u_{n} = |x_{n}|^{2} - 1$ and using the Vandermonde orthogonality $\mathbf{V}^{H}\mathbf{V} = N\mathbf{I}_{|\mathcal{S}|}$, we obtain
\begin{equation}
    \mathbf{Z}^{H}\mathbf{Z}
    = N\mathbf{I}_{|\mathcal{S}|} + \boldsymbol{\varepsilon},
    \quad
    \boldsymbol{\varepsilon} =
    \mathbf{V}^{H}\mathrm{diag}(u_{n})\mathbf{V},
    \nonumber
\end{equation}
where $\boldsymbol{\varepsilon}$ is a zero-mean Hermitian perturbation since $\mathbb{E}[u_{n}] = 0$. Setting $\xi = \lambda + N$ and $\mathbf{q} = \mathbf{Z}^{H}\mathbf{x}$, the quantity $\rho_{\mathcal{S}}$ from Lemma~\ref{lem:energy} can be rewritten as $\rho_{\mathcal{S}} = \mathbf{q}^{H}(\xi\mathbf{I} + \boldsymbol{\varepsilon})^{-1}\mathbf{q}/N$. From the energy identity \eqref{eq:energy-identity}, $\|\mathbf{w}_{\mathrm{ROI}}\|_{2}^{2} = N/(1-\rho_{\mathcal{S}}) - \|\mathbf{Z}^{H}\mathbf{w}_{\mathrm{ROI}}\|_{2}^{2}/\lambda$. The correction term $\|\mathbf{Z}^{H}\mathbf{w}_{\mathrm{ROI}}\|_{2}^{2}/\lambda$ is of order $\lambda$ and is negligible in the regime $\lambda \ll N$. Hence $\|\mathbf{w}_{\mathrm{ROI}}\|_{2}^{2} \approx N/(1-\rho_{\mathcal{S}})$, and $\delta^{(N)}_{\mathrm{ROI}} \approx 1/(1-\rho_{\mathcal{S}})$. In the regime given that $|\mathcal{S}|/N$ is small enough, the spectral-norm bound $\|\boldsymbol{\varepsilon}\|/\xi \leq C\sqrt{|\mathcal{S}|/N} < 1$ holds for a constant $C$ independent of $N$. A Neumann-series expansion of $(\xi\mathbf{I} + \boldsymbol{\varepsilon})^{-1}$ therefore converges, and the leading term gives $\rho_{\mathcal{S}} \approx \mathbf{q}^{H}\mathbf{q}/(N\xi)$. Concentration of $\mathbf{q}^{H}\mathbf{q}$ around its mean, combined with $\mathbb{E}[\mathbf{q}^{H}\mathbf{q}] = |\mathcal{S}|N\kappa_{2}$ and $\lambda \ll N$, yields $\rho_{\mathcal{S}} \approx \kappa_{2}|\mathcal{S}|/N$, leading to \eqref{eq:delta-N-LO}. 
\end{proof}
\vspace{-0.5\baselineskip}
The assumption for the density $|\mathcal{S}|/N$ is verified through Fig.~\ref{fig:deviation}, where at $|\mathcal{S}|/N = 0.4$, the gap of the estimation efficiency between theory and numerical result is only about $2\%$ and widens only as $|\mathcal{S}|/N \to 1$.

\begin{figure}[t!]
    \centering
    \includegraphics[width=1.00\columnwidth]{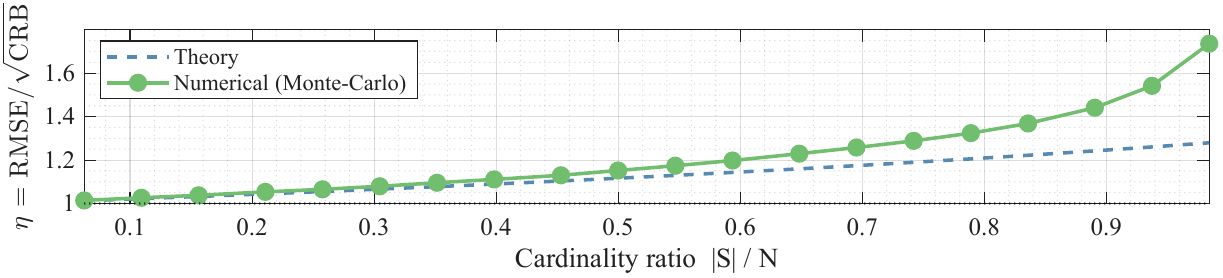}
    \caption{Estimation efficiency as a function of $|\mathcal{S}|/N$ for $256$-QAM at $N=256$, $\lambda=0.1$, $\mathrm{SNR}=10$~dB.}
    \label{fig:deviation}
\end{figure}

\noindent\textbf{Remark 1.} The noise enhancement of ROI-MMF depends on the constellation only through the centered moment $\kappa_{2}$ of $|x_{n}|^{2}$, which remains bounded for QAM formats ($\kappa_{2} \to 0.40$ as $M \to \infty$). This is in sharp contrast
to the RF receiver, whose noise penalty $\nu_{-2}$ grows unboundedly
as constellation points approach the origin.

Building on Lemma 1 and Lemma 2, we are now ready to derive a closed-form expression for the MSE of the ROI-MMF as the following theorem states. 
\vspace{-0.5\baselineskip}
\begin{theorem}[ROI-MMF delay estimation MSE]
\label{thm:roi-mmf-mse}
The closed-form expression of the ROI-MMF delay estimation MSE is approximately given by
\begin{equation}
    \mathrm{MSE}_{\mathrm{ROI},k}
    = \frac{3\sigma^{2}}{8\pi^{2}\Delta f^{2}|\alpha_{k}|^{2}N^{3}}
      \cdot \delta_{\mathrm{ROI}},
    \label{eq:mse-roi}
\end{equation}
with $\delta_{\mathrm{ROI}} \ = \delta^{(N)}_{\mathrm{ROI}}
+ \delta^{(I)}_{\mathrm{ROI}}$.
\end{theorem}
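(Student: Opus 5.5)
The plan is to start from the unified expression \eqref{eq:mse-general} with $\mathbf{w}=\mathbf{w}_{\mathrm{ROI}}$ and treat the numerator terms $N_k$ and $I_k$ separately. The denominator needs no new work: as noted after \eqref{eq:mse-general}, the gain constraint together with the i.i.d.\ symmetry gives $\mathbb{E}[g_n]=1$, so the denominator is $2(2\pi\Delta f)^2|\alpha_k|^2(\sum_n n^2)^2$. Using $\sum_{n=0}^{N-1}n^2\approx N^3/3$, any numerator of the form $\sigma^2\delta\sum_n n^2$ collapses to
\begin{equation}
\frac{3\sigma^2\delta}{8\pi^2\Delta f^2|\alpha_k|^2N^3}.
\nonumber
\end{equation}
This is the same normalization that produces \eqref{eq:mse-mf}--\eqref{eq:mse-rf}, and it fixes the prefactor in \eqref{eq:mse-roi}. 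The task therefore reduces to writing $I_k(\mathbf{w}_{\mathrm{ROI}})+N_k(\mathbf{w}_{\mathrm{ROI}})$ as $\sigma^2\sum_n n^2\,(\delta^{(N)}_{\mathrm{ROI}}+\delta^{(I)}_{\mathrm{ROI}})$.

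\textbf{Noise term.} Lemma~\ref{lem:noise-enhancement} gives $N_k(\mathbf{w}_{\mathrm{ROI}})=\sigma^2\delta^{(N)}_{\mathrm{ROI}}\sum_n n^2$ directly. Its justification is the approximation $\mathbb{E}[\sum_n n^2|w_n|^2]\approx\frac{\|\mathbf{w}\|^2}{N}\sum_n n^2$, which I take from the i.i.d.\ exchangeability of the weight magnitudes across subcarriers. So this term contributes $\delta^{(N)}_{\mathrm{ROI}}=N/(N-\kappa_2|\mathcal{S}|)$.

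\textbf{Interference term.} For $j\neq k$, write $g_n=1+\tilde g_n$ with $\mathbb{E}[\tilde g_n]=0$. The deterministic part $\sum_n n e^{j2\pi n\xi_{k,j}}$ has no variance, so
\begin{equation}
\mathrm{Var}(B_{k,j})=\mathbb{E}\Bigl|\sum_n n\tilde g_n e^{j2\pi n\xi_{k,j}}\Bigr|^2.
\nonumber
\end{equation}
Since all targets lie in $\mathcal{T}$, each offset $\xi_{k,j}$ corresponds to a shift lying in (or interpolated between) the elements of $\mathcal{S}$. The sequence $\tilde g_n$ has its spectral content on $\mathcal{S}$ controlled by the coefficients $c_m=\frac1N\mathbf{w}^H\mathbf{z}_m$. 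I would bound the weighted sum through the per-shift sidelobe energy and Parseval over the shift set, treating $\tilde g_n$ as approximately white with per-sample variance tied to the spectral energy on $\mathcal{S}$. Lemma~\ref{lem:energy} then bounds that energy: $\sum_{m\in\mathcal S}|c_m|^2\le \lambda/(N(1-\rho_{\mathcal S}))$, which is $O(\lambda/N)$. This should give
\begin{equation}
\mathrm{Var}(B_{k,j})\lesssim \frac{C\lambda}{1-\rho_{\mathcal S}}\,\frac{\sum_n n^2}{N}.
\nonumber
\end{equation}
Defining
\begin{equation}
\delta^{(I)}_{\mathrm{ROI}}=\frac{1}{\sigma^2\sum_n n^2}\sum_{j\neq k}|\alpha_j|^2\,\mathrm{Var}(B_{k,j}),
\nonumber
\end{equation}
this yields $\delta^{(I)}_{\mathrm{ROI}}=O\bigl(\lambda\sum_{j\neq k}\mathrm{SNR}_j/N\bigr)$. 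The point is that this is small for $\lambda\ll N$, in contrast with the MF factor $(\mu_4-1)$.

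\textbf{Main obstacle.} The hard step is rigorously converting the sidelobe energy on the discrete grid $\mathcal{S}$ into the variance of the $n$-weighted, off-grid sum $B_{k,j}$. The weights $n$ and the fractional $\xi_{k,j}$ break exact Parseval. I would first try to express $n\tilde g_n$ via a derivative of the ambiguity function in delay and bound it by the sidelobe slope on neighbouring grid points. Failing that, I would accept a constant-factor approximation, consistent with the word ``approximately'' in the statement. Adding the two contributions then gives \eqref{eq:mse-roi}.
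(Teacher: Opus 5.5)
Your proposal follows the paper's proof essentially step for step. It normalizes \eqref{eq:mse-general} using $\mathbb{E}[g_n]=1$ and $\sum_n n^2\approx N^3/3$, takes the noise part from Lemma~\ref{lem:noise-enhancement}, defines $\delta^{(I)}_{\mathrm{ROI}}=I_k(\mathbf{w}_{\mathrm{ROI}})/(\sigma^2\sum_n n^2)$, and controls it through the in-ROI sidelobe bound $\sum_{m\in\mathcal S}|c_m|^2\le\lambda/(N(1-\rho_{\mathcal S}))$ from Lemma~\ref{lem:energy}. The paper states $\delta^{(I)}_{\mathrm{ROI}}\le\rho_{\mathrm{tot}}\lambda/(N(1-\rho_{\mathcal S}))$ with constant one via a one-line appeal to Cauchy--Schwarz, whereas you keep a constant $C$ and flag the step from on-grid sidelobe energy to the $n$-weighted off-grid variance as unproven --- the paper also only sketches that step, your whiteness heuristic cannot literally hold (the ROI-MMF notches the spectrum $\{c_m\}$ of $\tilde g_n$ on $\mathcal S$ but not outside it), and the step is not needed for the decomposition $\delta_{\mathrm{ROI}}=\delta^{(N)}_{\mathrm{ROI}}+\delta^{(I)}_{\mathrm{ROI}}$ that the statement asserts.
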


\begin{proof}
Let $\rho_{\mathrm{tot}} = \sum_{j\neq k}|\alpha_{j}|^{2}/\sigma^{2}$ denote the aggregate interference SNR. The interference residual of the ROI-MMF, defined as $\delta^{(I)}_{\mathrm{ROI}} = \sum_{j\neq k}|\alpha_{j}|^{2}\, \mathrm{Var}(B_{k,j}^{\mathrm{ROI}})/(\sigma^{2}\sum_{n} n^{2})$ such that $I_k(\mathbf{w}_{\mathrm{ROI}})=\sigma^2\delta^{(I)}_{\mathrm{ROI}}\sum_{n} n^{2}$, satisfies
\begin{equation}
    \delta^{(I)}_{\mathrm{ROI}}
    \leq \frac{\rho_{\mathrm{tot}}\,\lambda}{N(1-\rho_{\mathcal{S}})},
    \label{eq:delta-I-bound}
\end{equation}
which follows from the suppression property $\sum_{m \in \mathcal{S}}|c_{m}|^{2} \leq \lambda/(N(1 - \rho_{\mathcal{S}}))$ established in Lemma~\ref{lem:energy} together
with an application of the Cauchy--Schwarz inequality. The bound \eqref{eq:delta-I-bound} is structurally controlled by the design parameter $\lambda$ and vanishes as $\lambda \to 0$. Combining this interference residual with the noise enhancement $\delta^{(N)}_{\mathrm{ROI}}$ characterized in Lemma~\ref{lem:noise-enhancement} and substituting into the unified form \eqref{eq:mse-general} yields \eqref{eq:mse-roi}.
\end{proof}

\vspace{-0.5\baselineskip}
\noindent\textbf{Remark 2.} Although the bound \eqref{eq:delta-I-bound} vanishes as $\lambda\to0$, the noise enhancement grows in that same limit, so retaining the $\lambda$-dependence of both terms reveals a finite optimum. Setting $\mathrm{d}\delta_{\mathrm{ROI}}/\mathrm{d}\lambda = 0$ in the regime $\lambda \ll N$ yields a unique optimum
\begin{equation}
    \lambda^{\star} = \frac{N(N-\kappa_{2}|\mathcal{S}|)}{2\rho_{\mathrm{tot}}(N-\kappa_{2}|\mathcal{S}|)-N},
    \label{eq:lambda-opt}
\end{equation}
so the optimal regularization is inversely proportional to the target SNR. Nevertheless, since $\rho_{\mathrm{tot}}$ is not known a priori in practice, we adopt the fixed value $\lambda = 0.1$, which empirically incurs a negligible loss relative to \eqref{eq:lambda-opt}.

\subsubsection{Performance Comparisons} Applying the standard Fisher information analysis yields the per-target CRB as $\mathrm{CRB} = \frac{\sigma^{2}}{8\pi^{2}\Delta f^{2}|\alpha_{k}|^{2}\sum_{n} n^{2}}$\cite{han2025constellation}. To compare the ranging performance of MF, RF, and ROI-MMF, we introduce the estimation efficiency defined as $\eta = \mathrm{RMSE}/\mathrm{\sqrt{CRB}}$. Then, dividing \eqref{eq:mse-mf}, \eqref{eq:mse-rf}, and \eqref{eq:mse-roi} yields
\begin{equation}
    \eta_{\mathrm{MF}} = \sqrt{1 + (\mu_{4} - 1)\rho_{\mathrm{tot}}},\;
    \eta_{\mathrm{RF}} = \sqrt{\nu_{-2}},\;
    \eta_{\mathrm{ROI}} = \sqrt{\delta_{\mathrm{ROI}}},
    \label{eq:etas}
\end{equation}
where $\rho_{\mathrm{tot}} = \sum_{j\neq k}|\alpha_{j}|^{2}/\sigma^{2}$
is the aggregate interference SNR introduced above. Unlike MF and RF whose estimation efficiency scales according to $\mu_4$ and $\nu_{-2}$ of the constellation, the ROI-MMF shows the superior performance less affected by the modulation constellation geometry, which is shown in the following corollary.

\begin{figure}[!t]
    \centering
    \subfloat[16QAM]{%
        \includegraphics[width=0.5\columnwidth]{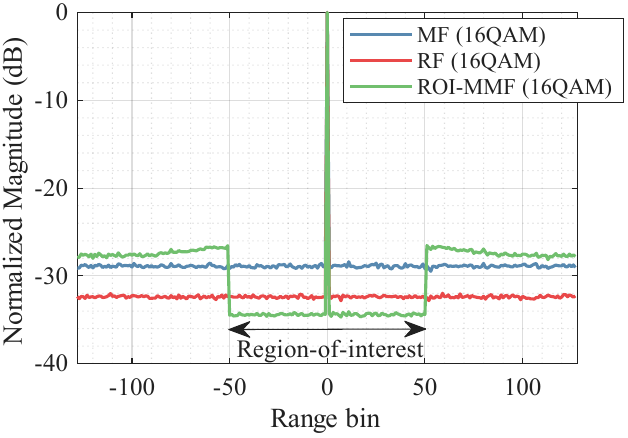}%
        \label{fig:profile_16qam}}%
    \hspace{0.00\columnwidth}%
    \subfloat[256QAM]{%
        \includegraphics[width=0.5\columnwidth]{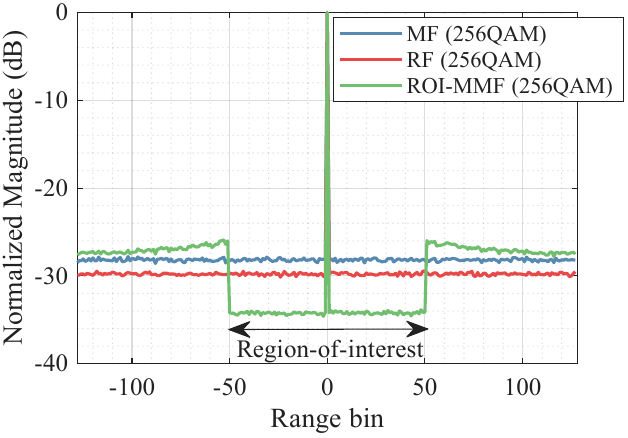}%
        \label{fig:profile_256qam}}%
    \caption{Normalized range profiles of the MF, RF, and the proposed ROI-MMF receivers under (a) 16-QAM and (b) 256-QAM at a target SNR of 10 dB.}
    \label{fig:SNR10dB}
\end{figure}

\begin{corollary}
\label{cor:dominance}
In the regime $\lambda \ll N$ and $|\mathcal{S}| \ll N$, the
ROI-MMF estimation efficiency $\eta_{\mathrm{ROI}}$ nearly approaches unity
independently of the constellation geometry,
\end{corollary}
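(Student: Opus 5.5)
The plan is to start from the expression $\eta_{\mathrm{ROI}}=\sqrt{\delta_{\mathrm{ROI}}}$ in \eqref{eq:etas}, where $\delta_{\mathrm{ROI}}=\delta^{(N)}_{\mathrm{ROI}}+\delta^{(I)}_{\mathrm{ROI}}$ from Theorem~\ref{thm:roi-mmf-mse}. I will show that each of the two summands tends to its "ideal" value, $1$ and $0$ respectively, with deviations controlled only by $|\mathcal{S}|/N$ and $\lambda/N$. The constellation should enter only through bounded quantities.

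\emph{Noise term.} By Lemma~\ref{lem:noise-enhancement},
\[
\delta^{(N)}_{\mathrm{ROI}}=\frac{N}{N-\kappa_2|\mathcal{S}|}=\frac{1}{1-\kappa_2|\mathcal{S}|/N}.
\]
I will bound $\kappa_2$ uniformly over admissible constellations. Since $|x_n|^2\le \max_q|s_q|^2$, $\kappa_2$ is finite for every finite constellation. For the QAM family, Remark~1 gives $\kappa_2\le 0.40$ uniformly in $M$, and for PSK $\kappa_2=0$. Hence, for $|\mathcal{S}|\ll N$,
\[
\delta^{(N)}_{\mathrm{ROI}}=1+\kappa_2\frac{|\mathcal{S}|}{N}+O\bigl((|\mathcal{S}|/N)^2\bigr)\to 1.
\]
The rate does not depend on the geometry, because $\kappa_2$ is uniformly bounded. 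This contrasts with $\nu_{-2}$, which blows up, and with $(\mu_4-1)\rho_{\mathrm{tot}}$, which grows with SNR.

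\emph{Interference term.} I will use \eqref{eq:delta-I-bound}, $\delta^{(I)}_{\mathrm{ROI}}\le \rho_{\mathrm{tot}}\lambda/(N(1-\rho_{\mathcal{S}}))$. In the proof of Lemma~\ref{lem:noise-enhancement}, $\rho_{\mathcal{S}}\approx\kappa_2|\mathcal{S}|/N$, so $1-\rho_{\mathcal{S}}$ is bounded away from zero, uniformly across constellations, when $|\mathcal{S}|\ll N$. Thus
\[
\delta^{(I)}_{\mathrm{ROI}}\lesssim \rho_{\mathrm{tot}}\lambda/N,
\]
which vanishes in the regime $\lambda\ll N$ for any fixed interference SNR. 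Combining the two bounds gives $1\lesssim\eta_{\mathrm{ROI}}^2\le 1+\kappa_2|\mathcal{S}|/N+\rho_{\mathrm{tot}}\lambda/N+o(\cdot)$. A short first-order expansion of the square root then shows that $\eta_{\mathrm{ROI}}=1+O(|\mathcal{S}|/N+\rho_{\mathrm{tot}}\lambda/N)$ with constellation-free constants.

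\emph{Expected obstacle.} The delicate step is the interference term. The product $\rho_{\mathrm{tot}}\lambda/N$ is small only if $\lambda\ll N/\rho_{\mathrm{tot}}$, not merely $\lambda\ll N$. I would therefore state the regime as $\lambda\rho_{\mathrm{tot}}\ll N$, which holds for the fixed choice $\lambda=0.1$ at practical SNRs, or alternatively invoke $\lambda^\star$ from Remark~2. Substituting $\lambda^\star\approx N/(2\rho_{\mathrm{tot}})$ makes the interference contribution $O(1/N)$-scaled and still independent of the constellation, since $\lambda^\star$ depends on the constellation only through the bounded $\kappa_2$. I would also note that the result inherits the approximations of Lemma~\ref{lem:noise-enhancement}, namely the Neumann leading term and the concentration of $\mathbf{q}^H\mathbf{q}$. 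So "nearly approaches unity" is meant in this leading-order sense, not as an exact inequality.
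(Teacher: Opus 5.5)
Your main line is the paper's proof. You split $\eta_{\mathrm{ROI}}^2=\delta^{(N)}_{\mathrm{ROI}}+\delta^{(I)}_{\mathrm{ROI}}$, expand Lemma~\ref{lem:noise-enhancement} to $1+\kappa_2|\mathcal{S}|/N$, reduce \eqref{eq:delta-I-bound} to $\rho_{\mathrm{tot}}\lambda/N$, and note that the constellation enters only through the bounded $\kappa_2$. Two of your additions tighten the paper's argument. First, you justify dropping the factor $1/(1-\rho_{\mathcal{S}})$, which the paper drops silently. Second, you point out that $\lambda\ll N$ alone is not enough: the deviation is $\rho_{\mathrm{tot}}\lambda/N$, so one needs $\lambda\rho_{\mathrm{tot}}\ll N$, which the fixed $\lambda=0.1$ satisfies at the SNRs considered.

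The step that fails is your fallback via Remark~2. Write $D=N-\kappa_2|\mathcal{S}|$. Then \eqref{eq:lambda-opt} gives $\rho_{\mathrm{tot}}\lambda^\star/N=\rho_{\mathrm{tot}}D/(2\rho_{\mathrm{tot}}D-N)>1/2$ whenever $\lambda^\star>0$. So at $\lambda^\star\approx N/(2\rho_{\mathrm{tot}})$ the bound \eqref{eq:delta-I-bound} only gives $\delta^{(I)}_{\mathrm{ROI}}\lesssim 1/2$. That is $O(1)$, not $O(1/N)$, so it yields only $\eta_{\mathrm{ROI}}^2\lesssim 3/2$, which is not near unity. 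Drop that alternative and state the regime explicitly as $\lambda\rho_{\mathrm{tot}}\ll N$.

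One further point: $\kappa_2=\mu_4-1$ under unit average power. So ``constellation-independent'' really means that the moment which limits the MF is now multiplied by $|\mathcal{S}|/N$ instead of $\rho_{\mathrm{tot}}$. That is exactly the contrast you draw.
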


\begin{proof}
From \eqref{eq:etas}, $\eta_{\mathrm{ROI}}^{2} =
\delta_{\mathrm{ROI}} = \delta^{(N)}_{\mathrm{ROI}} +
\delta^{(I)}_{\mathrm{ROI}}$. By Lemma~\ref{lem:noise-enhancement}
and \eqref{eq:delta-I-bound}, $\delta^{(N)}_{\mathrm{ROI}} \approx
1 + \kappa_{2}|\mathcal{S}|/N$ and $\delta^{(I)}_{\mathrm{ROI}}
\leq \rho_{\mathrm{tot}}\lambda/N$, so $\delta_{\mathrm{ROI}}$
deviates from unity only by $\kappa_{2}|\mathcal{S}|/N +
\rho_{\mathrm{tot}}\lambda/N$ in the stated regime. As the
centered moment $\kappa_{2}$ saturates to a small constant,
$\delta_{\mathrm{ROI}}$ remains close to unity and
constellation-independent.
\end{proof}

\vspace{-0.5\baselineskip}
\noindent\textbf{Remark 3.} The constellation-independent property of ROI-MMF stems from the fact that $\delta_{\mathrm{ROI}}$ depends on the constellation only through the centered moment $\kappa_{2}$ multiplied by $|\mathcal{S}| / N \ll 1$, and through the factor $\rho_{\mathrm{tot}}\lambda/N$ which is structurally controlled by
$\lambda$. In contrast, the MF penalty grows with $\mu_{4}$ and the RF penalty grows with $\nu_{-2}$. The advantage of ROI-MMF over the classical receive filters therefore widens as the constellation moves away from constant
modulus.

\vspace{-0.5\baselineskip}
\section{Numerical Simulations and Experimentation}
\label{Sec::4}
\subsection{Numerical Simulation Results}
For the numerical simulations, we adopt a CP-OFDM waveform with
$N = 256$, $B = 50$~MHz, and a CP length of $0.64~\mu$s. We coherently process $L$ number of OFDM symbols in one CPI, of which value is $L=8$.  The proposed ROI-MMF is configured with the bilateral shift set $\mathcal{S} = \{-50, \ldots, -1\} \cup \{1, \ldots, 50\}$, corresponding to the range interval of $-150$~m to $150$~m, and regularization parameter $\lambda = 0.1$. Each result is averaged over $2{,}000$ Monte-Carlo runs. A subspace-based matrix pencil (MP) estimator is employed for off-grid range estimation.

We first examine the structural difference between the three receivers through their range profiles. Fig.~\ref{fig:SNR10dB} shows the normalized range profiles of MF, RF, and ROI-MMF under $16$-QAM and $256$-QAM at a target SNR of $10$~dB with a single scatterer. The MF profile exhibits a constellation-dependent sidelobe floor, since its sidelobe energy is governed by the fourth moment $\mu_{4}$. The RF profile suppresses the sidelobes but lifts the noise floor. The proposed ROI-MMF, in contrast, forms a clear notch across the prescribed ROI for both constellations, and the two profiles nearly overlap, which confirms the constellation-independent suppression behavior predicted by Lemma~\ref{lem:noise-enhancement}.

\begin{comment}
        \subfloat[QPSK]{%
        \includegraphics[width=0.5\columnwidth]{Figure_3_QPSK.pdf}%
        \label{fig:rmse_qpsk}%
    }%
    \hspace{0.00\columnwidth}%
    \subfloat[16QAM]{%
        \includegraphics[width=0.5\columnwidth]{Figure_3_16QAM.pdf}%
        \label{fig:rmse_16qam}%
    }%
    \vspace{-0.5em}
\end{comment}
\begin{figure}[t!]
    \centering
    \subfloat[16QAM]{%
        \includegraphics[width=0.5\columnwidth]{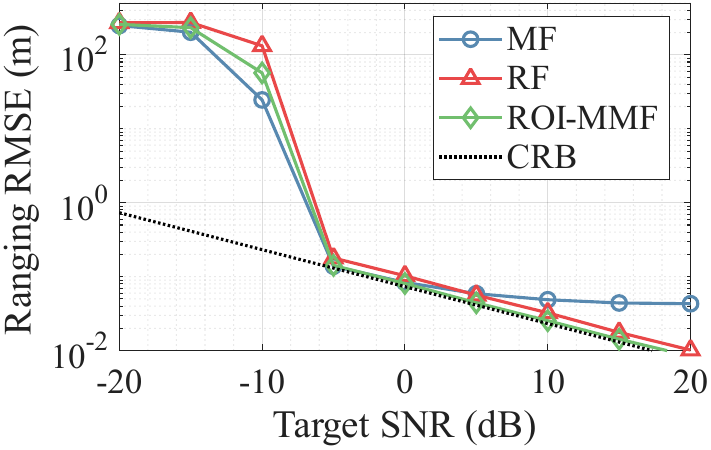}%
        \label{fig:rmse_16qam}%
    }%
    \hspace{0.00\columnwidth}%
    \subfloat[256QAM]{%
        \includegraphics[width=0.5\columnwidth]{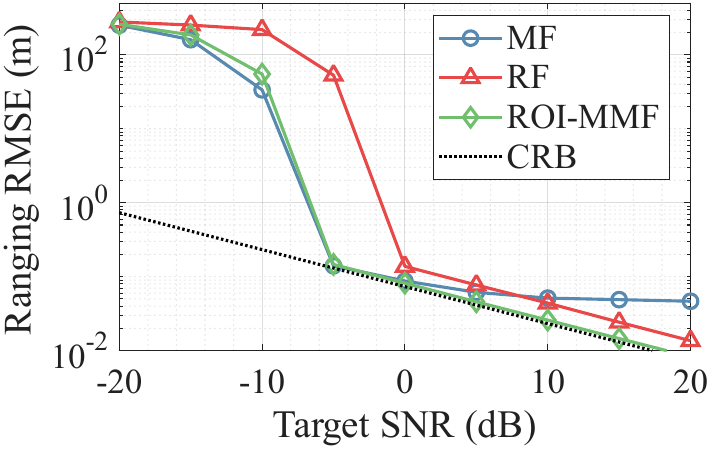}%
        \label{fig:rmse_256qam}%
    }%
    \vspace{-0.5em}
    \caption{Range estimation RMSE versus target SNR for the MF, RF, and the proposed ROI-MMF under (a) 16-QAM, and (b) 256-QAM modulation. $N = 256$, $L = 8$, 
$\lambda = 0.1$, and $\mathcal{S} = \{\pm 1, \ldots, \pm 50\}$.}
    \label{fig:rmse_vs_snr}
\end{figure}

Next, we evaluate the multi-target ranging accuracy. Two scatterers are placed at off-grid ranges $r_{1} = 60.9$~m and $r_{2} = 90.9$~m within the ROI, and $L = 8$ consecutive OFDM symbols are collected within one CPI. Fig.~\ref{fig:rmse_vs_snr} shows the range estimation RMSE versus the target SNR for the two different QAM constellations, together with the CRB. For non-constant-modulus constellations, the MF curves saturate as the SNR increases, since the sidelobe interference from the other target dominates the estimation error, consistent with the $\mu_{4}$-limited expression \eqref{eq:mse-mf}. The RF curves exhibit a constant gap from the CRB whose magnitude grows with $\nu_{-2}$, matching \eqref{eq:mse-rf}. In contrast, the proposed ROI-MMF tracks the CRB across all constellations and SNRs. Table~\ref{tab:rmse-crb} reports the estimation efficiency $\eta$ at $\mathrm{SNR} = 10$~dB, comparing the closed-form predictions \eqref{eq:etas} against the simulated values across QPSK through $256$-QAM. The closed-form expressions closely match the numerical results, providing validation of the constellation-independent behavior established in Corollary~\ref{cor:dominance}.

\begin{table}[t!]
\centering
\fontsize{7}{9}\selectfont
\renewcommand{\arraystretch}{0.9}
    \caption{The values of theoretical and numerical 
    $\eta = \mathrm{RMSE}/\mathrm{\sqrt{CRB}}$.}
    \label{tab:rmse-crb}
    \begin{tabular}{>{\centering\arraybackslash}m{4em}|
                    >{\centering\arraybackslash}m{2.5em}
                    >{\centering\arraybackslash}m{2.5em}|
                    >{\centering\arraybackslash}m{2.5em}
                    >{\centering\arraybackslash}m{2.5em}|
                    >{\centering\arraybackslash}m{2.5em}
                    >{\centering\arraybackslash}m{2.5em}}
    \toprule
    & \multicolumn{2}{c|}{MF}
    & \multicolumn{2}{c|}{RF}
    & \multicolumn{2}{c}{ROI-MMF} \\
    Const. & Theory & Num. & Theory & Num. & Theory & Num. \\
    \midrule
    QPSK    & 1.000 & 1.013 & 1.000 & 1.013 & 1.000 & 1.013 \\
    16QAM   & 2.049 & 2.102 & 1.374 & 1.398 & 1.069 & 1.095 \\
    32QAM   & 2.025 & 2.013 & 1.491 & 1.485 & 1.067 & 1.098 \\
    64QAM   & 2.193 & 2.188 & 1.638 & 1.607 & 1.084 & 1.102 \\
    128QAM  & 2.105 & 2.071 & 1.730 & 1.734 & 1.075 & 1.099 \\
    256QAM  & 2.225 & 2.212 & 1.850 & 1.894 & 1.087 & 1.115 \\
    \bottomrule
\end{tabular}
\end{table}

\vspace{-1\baselineskip}
\subsection{Experimental Results}
Beyond numerical simulations, we further validate the proposed receiver design and its theory through over-the-air experiments. To this end, a software-defined radio
(AD9363) with directional antennas of $12$~dBi gain is employed. The center frequency is set to $2.4$~GHz with a bandwidth of $B = 20$~MHz and $N = 512$ subcarriers.
In addition, $512$ OFDM symbols are transmitted per frame. The OFDM symbol duration is $14.4~\mu$s with a CP length of $1.6~\mu$s, and the sampling rate is $60$~MHz. The ROI is
set to $[50, 250]$~m to cover the deployment geometry of interest, and
each measurement result is obtained by averaging over $100$ independent
frames. As the prototype operates in a monostatic configuration with the transmitter and sensing receiver co-located on a single transceiver, no separate amplitude or phase calibration is applied. The desired target, a building wall with high reflectivity, is located approximately $130$~m from the ISAC node, as illustrated in Fig.~\ref{fig:england}.

\begin{figure}[t!]
    \centering
    \includegraphics[width=0.7\columnwidth]{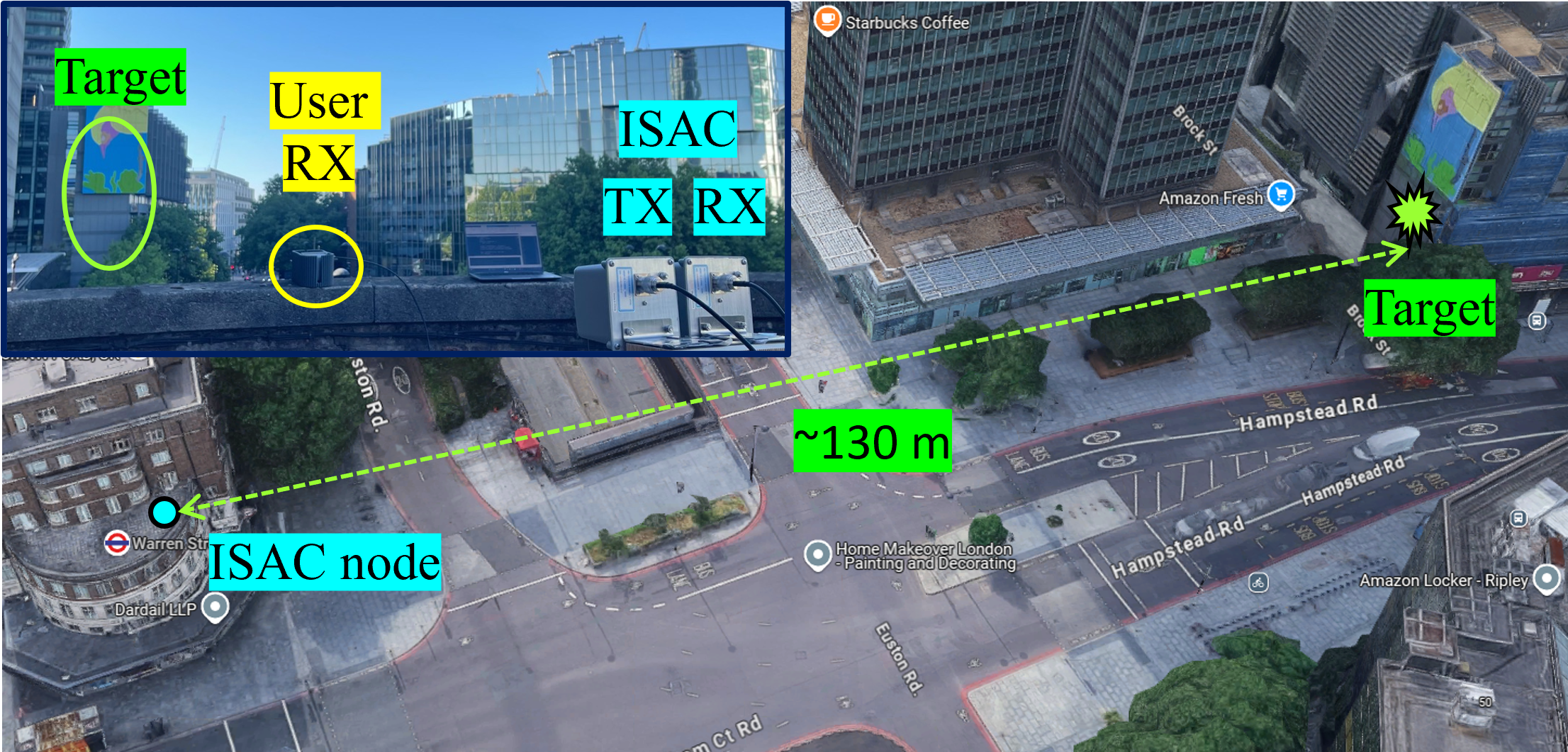}
    \caption{ Photograph of the measurement setup.}
    \label{fig:england}
\end{figure}

\begin{figure}[t!]
    \centering
    \subfloat[16QAM]{%
        \includegraphics[width=0.4\columnwidth]{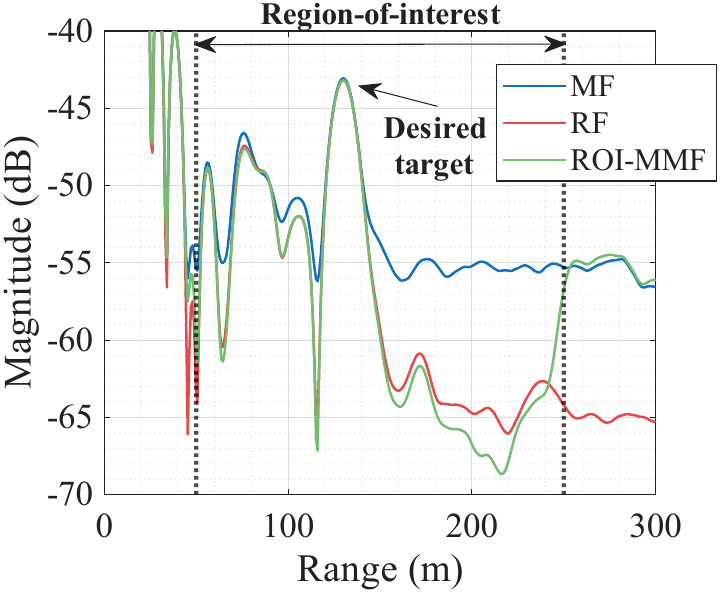}%
        \label{fig:range_16qam}}
    \subfloat[32APSK]{%
        \includegraphics[width=0.4\columnwidth]{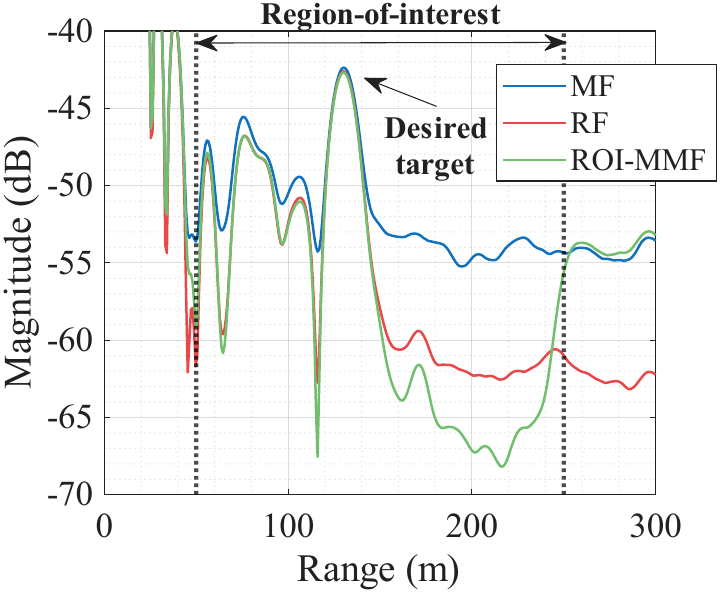}%
        \label{fig:range_32apsk}}
    \caption{Measured range profiles of the MF, RF, and ROI-MMF receivers under (a) 16-QAM and (b) customized 32-APSK modulation.}
    \label{fig:range_profile}
\end{figure}

Fig.~\ref{fig:range_profile} shows the measured range profiles
under $16$-QAM and a customized $32$-APSK modulation. As predicted by
the analysis, the MF profile suffers from elevated data-dependent
sidelobes that mask the desired target peak. The RF
profile suppresses these sidelobes but lifts the overall noise floor. The proposed ROI-MMF effectively suppresses the
response within the prescribed ROI while preserving the target peak
at $\sim\!130$~m, achieving the highest in-ROI dynamic range for both
constellations. This experimentally confirms that the
constellation-independent ranging property established theoretically also
carries over to practical over-the-air deployments.

\vspace{-0.5\baselineskip}
\section{Conclusion}
\label{Sec::5}
In this paper, we have proposed a ROI-MMF for payload-based OFDM-ISAC ranging, which suppresses data-dependent sidelobes within a prescribed delay region while
preserving the mainlobe response. Through an estimation-theoretic analysis,
we have shown that the proposed receiver dominates conventional filters and nearly approaches the CRB independently of the modulation constellation. The theoretical findings are validated through numerical simulations and over-the-air measurements, providing a constellation-independent receive-filter framework.

\vspace{-1\baselineskip}
\bibliographystyle{IEEEtran}
\bibliography{IEEEabrv,reference}
\end{document}